\theoremstyle{plain}
\newtheorem{thm}{Theorem}[section]
\newtheorem{lem}[thm]{Lemma}
\newtheorem{prop}[thm]{Proposition}
\theoremstyle{definition}
\newtheorem{defn}{Definition}[section]
\DeclareMathOperator*{\ulalim}{\underleftarrow\lim}
\DeclareMathOperator{\rank}{rank}
\let\@@pmod\pmod
\DeclareRobustCommand{\pmod}{\@ifstar\@pmods\@@pmod}
\def\@pmods#1{\mkern4mu({\operator@font mod}\mkern 6mu#1)}
\newcommand{\mbbC}{\mathbb{C}}
\newcommand{\mbbF}{\mathbb{F}}
\newcommand{\mbbR}{\mathbb{R}}
\newcommand{\mbbZ}{\mathbb{Z}}
\author{Kevin T. Tian}
\author{Eric Samperton}
\address{Department of Mathematics, University of California, Santa Barbara, CA 93106, USA}
\email{ktian@math.ucsb.edu;eric@math.ucsb.edu}
\author{Zhenghan Wang}
\address{Microsoft Station Q and Department of Mathematics, University of California, Santa Barbara, CA 93106, USA}
\email{zhenghwa@microsoft.com;zhenghwa@math.ucsb.edu}
\begin{document}

\title{Haah codes on general three-manifolds}

\thanks{K.T. and Z.W. are partially supported by NSF grant  FRG-1664351. Z.W. would like to thank J. Haah and M. Hastings for insightful discussions.}
\begin{abstract}

Haah codes represent a singularly interesting gapped Hamiltonian schema that has resisted a natural generalization, although recent work shows that the closely related type I fracton models are more commonplace.  These type I siblings of Haah codes are better understood, and a generalized topological quantum field theory framework has been proposed.  Following the same conceptual framework, we outline a program to generalize Haah codes to all 3-manifolds using Hastings' LR stabilizer codes for finite groups.

\end{abstract}


\date{\today}
\maketitle

\section{Introduction}

Haah codes represent a singularly interesting gapped Hamiltonian schema that has resisted a natural generalization, although recent work shows that the closely related type I fracton models are more commonplace \cite{Ha11,vhf15,NH18}.  These type I siblings of Haah codes are better understood, and a generalized topological quantum field theory (TQFT) framework is proposed in \cite{SSWC18}.  Following the same conceptual framework in \cite{SSWC18}, we outline a program to generalize Haah codes to all 3-manifolds using Hastings' LR stabilizer codes for finite groups \cite{Hs14}.

Fracton models challenge the conventional notion of topological phase wherein the finite dimensional ground state Hilbert space $V(Y)$ depends only on the topology of the spatial manifold $Y$ that the phase occupies.  In fact, to be more precise, some subtlety already reared its head in conventional topological phases of fermion systems and Chern-Simons theories.  For fermion systems, the associated Hilbert space of an oriented spatial manifold $Y$ also depends on a spin structure $s$, so the Hilbert space is only well-defined for the pair $(Y,s )$.  Likewise, the framing anomaly for Chern-Simons theories means that the Hilbert space is only well-defined for a pair $(Y,f)$, where $f$ is a framing of $Y$ refining a spin structure.  Inspired by these ideas, \cite{SSWC18} argues that to formulate the $X$-cube model as a generalized TQFT, $Y$ should be equipped with a \emph{singular compact total foliation,} which is a kind of dual framing---three sets of perpendicular integrable tangent planes at each point.

In this paper, we propose that in order to generalize the Haah code to all 3-manifolds,
the correct additional topological structure on the spatial manifold $Y$ is a pair of finite subsets $(S_1,S_2)$ of the fundamental group $\pi_1(Y)$ of the spatial manifold $Y$ (we always assume $Y$ is connected and omit the base point as it is immaterial for our discussion).  The two subsets $\{S_i\}$ define a \lq\lq unit cell" $U_e$, which is the cone from the unit element to all the elements in $\{S_1\cup \bar{S_1}\cup \bar{S_2}\cup S_2\}$ for the fundamental group lattice in the universal cover $\widetilde{Y}$ of $Y$.  Choosing $S_1=\{1,x,y,z\},S_2=\{1,xy,yz,xz\}$ of $\mbbZ^3$ recovers the Haah codes for $T^3$.  The resulting theory typically assigns an infinite dimensional Hilbert space $V(Y,S_1,S_2)$ to each spatial manifold $Y$ equipped with a pair $(S_1,S_2)$.  The ground state Hilbert space $V(Y,S_1,S_2)$ is constructed from the profinite completion $\widehat{\pi_1(Y)}$ of the fundamental group $\pi_1(Y)$.  Fractal pictures arise when the profinite completion $\widehat{\pi_1(Y)}$ is visualized.

The fundamental group $\pi_1(X)$ of a topological space $X$ is one of the most important topological invariants.  It lives two dual lives: it can be identified as a collection of points $\Pi(X)$ in the universal cover $\widetilde{X}$ or as a collection of based closed loops in $X$.  A well-known example is the $n$-torus $T^n$, which has fundamental group $\mathbb{Z}^n$ and universal cover $\mathbb{R}^n$.  When $X$ is a manifold, the collection of points $\Pi(X)$ is homogeneous in the sense that their small neighborhoods are homeomorphic copies of the same open ball. Therefore, we can regard $\Pi(X)$ as a \lq\lq lattice" with one site for each point and edges will be generated by translations of the unit cell $U_c$ to $U_g$, which is the cone from $g$ to $\{gS_1\cup g\bar{S_1}\cup \bar{S_2}g\cup S_2g\}$\footnote{There are other natural ways to construct the edges, but they are not essential for our theory.}. Our models are then spin models on such infinite lattices $\pi_1(Y)$ approximated by its finite quotients.  In our generalization, the fundamental group $\pi_1(Y)$---identified as a lattice $\Pi(Y)$ in $\widetilde{Y}$---is subjected to periodic boundary conditions, which are labeled by finite index normal subgroups (FINs) of $\pi_1(Y)$.  The corresponding finite lattices ``in $Y$" are the closed covering $3$-manifolds $\widetilde{Y}_N$ with fundamental groups $N$ corresponding to the boundary conditions $N$.

Our generalized Haah models can be considered either as inside the space manifold as a collection of loops or as outside the space manifold as a collection of points in the universal cover.  We define a profinite low energy limit by a limiting procedure outside the manifold, so the resulting limit is closer to a large volume limit.  We did not investigate limiting procedures for loops inside the space manifold, which would be closer to a scaling limit.  Manifolds such as the $3$-torus $T^3$ and $S^1\times S^2$ are homeomorphic to all its covers, hence for these two manifolds the inside and outside point of views can be confusing.

Though the abstract theory applies to all groups, the existence of an appropriate limit would put strong restrictions on the groups $\Gamma$.  Our models should have limits on the profinite completion $\widehat{\Gamma}$ of $\Gamma$, a condition which makes fundamental groups of 3-manifolds especially pertinent.  Indeed, 3-manifolds stand out because their fundamental groups are always residually finite\footnote{A group is \emph{residually finite} if the intersection of all of its FINs is the trivial subgroup.} \cite{T82,He87}, and a group $\Gamma$ injects into its profinite completion $\widehat{\Gamma}$ if and only if $\Gamma$ is residually finite.

It should also be interesting to consider other groups associated to a topological space $X$, such as the higher homotopy groups $\pi_i(X)$ or homology groups $H_i(X)$.  Higher homotopy groups are preserved under covering spaces, therefore, they might make our scaling scheme easier.  Homology, on the other hand, does not behave trivially under covering spaces in any dimension, although in this case there are nice module structures coming from the deck transformation groups.  When the fundamental groups are abelian (such as in the Haah codes), then they are the same as the first homology $H_1$.  For concreteness and to maintain a close connection to Haah codes, our focus will be on the fundamental groups of closed (that is, compact without boundary) $3$-manifolds in this paper.

Another motivation of our work is to bring geometric group theory into condensed matter physics.  Our program will only be outlined in this paper, and the details will appear in \cite{STW}.

\section{Manifolds, Lattices, and Periodic Boundary Conditions}

Haah codes are defined on the cubic lattice $\mbbZ^3$ with periodic boundary conditions \cite{Ha11}.  Our interpretation is that the Haah codes are defined on lattices inside the $3$-torus $T^3$, whose fundamental group is $\mbbZ^3$.  To generalize Haah codes to all closed $3$-manifolds $Y$ we should understand: what do lattices and periodic boundary conditions mean in general?

Manifolds arose historically as domains of functions more general than regions in the Euclidean space $\mathbb{R}^n$.  Periodic functions are simply functions defined on the circle $S^1$, while doubly periodic functions are functions on the $2$-torus $T^2$.  More generally, functions on a closed manifold $Y$ will be regarded as some generalized periodic functions.  Each closed manifold can be realized as a convex polyhedron in Euclidean space with some complicated identification of the boundary with itself\footnote{Our manifolds in this paper are smooth, so by Whitehead's theorem they admit combinatorial triangulations.  Removing a top simplex from the triangulation of a given manifold, we collapse the rest of the manifold onto a spine, which is the gluing pattern of the polyhedron consisting of the union of the simplices in the collapsing sequence.}, hence our point-view agrees with the usual notion of a periodic function.  For example, every genus $g$ surface can be formed by gluing the edges of a convex $4g$-regular polygon.

The cubic lattice $\mbbZ^3$ in Haah codes is naturally replaced by the fundamental group $\pi_1(Y)$ of the manifold $Y$, visualized as a collection of $\pi_1(Y)$-invariant points in the universal cover $\widetilde{Y}$  of $Y$.  We propose that a periodic condition for $\pi_1(Y)$ is simply a finite index normal subgroup (FIN) $N \subset \pi_1(Y)$.  For the Haah codes, a normal subgroup of $\mbbZ^3$ is of the form $L_1\mbbZ \times L_2\mbbZ \times L_3\mbbZ$, which is exactly the usual periodic boundary conditions.  Note that the Haah codes are defined on the quotient groups $\mbbZ/{L_1\mbbZ}\times \mbbZ/{L_2\mbbZ}\times \mbbZ/{L_3\mbbZ}=\mbbZ^3/(L_1\mbbZ \times L_2\mbbZ \times L_3\mbbZ)$.

There are two kinds of lattices in Haah codes: the infinite cubic lattice $\mbbZ^3$ in the universal cover $\mbbR^3$, and the finite lattices $\mbbZ/L_1\mbbZ \times \mbbZ/L_2\mbbZ \times \mbbZ/L_3\mbbZ$ in the $3$-torus $T^3$.  In our generalization, the infinite cubic lattice $\mbbZ^3$ is generalized to the fundamental group $\pi_1(Y)$, and the finite lattices labeled by FINs $N$ are the quotient finite groups $G_N=\pi_1(Y)/N$.  The finite lattice $G_N$ labeled by a FIN $N$ is visualized as the finite covering space ${\widetilde{Y}}_N$ of $Y$ whose fundamental group is $N$ and where the relevant group of translations is the deck transformation group $G_N$ for the cover $\widetilde{Y}_N \to Y$.  In fact, all of $\pi_1(Y)$ acts by translations on $\widetilde{Y}_N$, but the kernel is precisely $N$.  This dual correspondence between FINs of $\pi_1(Y)$ and finite regular covering spaces of $Y$ is understood more generally via the classification of covering spaces; see \cite[Ch.~1]{Ht}.

We will use the following notations and terminologies:
\begin{itemize}
    \item $G$ denotes a finite group.
    \item $\Gamma$ denotes a general group, not necessarily finite or even countable.  In general, we denote the unit element $e\in \Gamma$ as $1$, however, when $\Gamma=\mathbb{Z}$ or some other abelian group in additive notation, we write $e=0$.
    \item $\{S_i,i=1,2,...\}$ some finite subsets of $\Gamma$ or $G$.
    \item $\mathcal{N}(\Gamma)$ denotes the set of all FINs of $\Gamma$.  Observe that $\mathcal{N}(\Gamma)$ has the structure of a partially ordered set (poset) if we order FINs by inclusion.
    \item $N,M$ denote particular FINs.
\end{itemize}

In condensed matter physics, a many-body quantum system on a space $Y$ has a preferred coordinate system given by the ``lattice" of ``atoms" (or ``spins").  The word ``atom" is used as a catch-all for any local constituent of the many-body system consisting of a cluster of things that, taken together, are regarded as a local degree of freedom (LDOF).  We will consider only uniform LDOFs consisting of a few qudits $(\mbbC^d)^{\otimes q}$.  The word ``lattice" is also generalized to be used interchangeably with ``graph" so that atoms are located at vertices of the graph.  Often graphs in this paper will be some Cayley graphs of finite groups, so a generalized version of translation invariant lattice makes sense.  Identifying group elements of $\Gamma$ as the sites of a graph, we imagine there is an associated Hilbert space $L(\Gamma, d, q)$ for all degree of freedom (DOF); formally the Hilbert space is
\[ L(\Gamma, d,q)=\otimes_{\gamma \in \Gamma} ((\mbbC^d)^{\otimes q})_\gamma. \]
A mathematically rigorous definition of $L(\Gamma, d,q)$ is subtle when $\Gamma$ is infinite, so we defer the discussion until the last section.

\section{LR Models on finite groups}

Haah developed the polynomial method to study Pauli Hamiltonians or stabilizer codes on translation-invariant lattices from abelian groups \cite{Ha16,Ha13}.  This powerful method translates the study of degeneracy and excitations of such models into the mathematics of symplectic geometry over finite fields, and some of the ideas extend to arbitrary (non-abelian) groups.  In this section, we define the LR model on finite groups based on Hastings' LR codes \cite{Hs14}, and then apply Haah's theory to study the degeneracy of the LR Hamiltonians on general finite groups.

\subsection{LR Hamiltonian schema}

A Hamiltonian schema means a recipe to construct families of Hamiltonians from some given input data.

\subsubsection{Hastings' LR codes}

The third author learned about LR codes on finite groups through a private communication with Hastings that contained the material in this subsection \cite{Hs14}.  We present the LR codes as a Hamiltonian schema on finite groups.

Let $(G;S_1,S_2)$ be a triple where $G$ is any finite group, and $S_1$ and $S_2$ are two fixed subsets of $G$.   The two subsets $S_1, S_2$ are used to translate any fixed element of the group $G$.  Let
\[ \begin{aligned}
\bar{S}&=\{h^{-1}|h\in S\}, \\
gS&=\{gh|h\in S\}, \\
Sg&=\{hg|h\in S\},
\end{aligned} \]
for any $g\in G, S\subset G$.

Let
\[ L(G,2,2)=\bigotimes_{g\in G} (\mbbC^2\otimes \mbbC^2)_g \]
be the Hilbert space that assigns a bi-qubit $\mbbC^2\otimes \mbbC^2$ to each group element $g$. We may imagine there are two copies of $G$ in two layers $G\times \{\pm\}$, and the first and second qubits of the bi-qubit $(\mbbC^2\otimes \mbbC^2)_g$ are assigned to $(g,+)$ and $(g,-)$, respectively.  Given a set $S\subset G$ and a Pauli matrix $P$ ($P=X,Y,Z$ for Pauli $\sigma_i,i=x,y,z$), we denote by $P_{(S, \epsilon)}$ the product matrix of $P$ acting on each qubit labeled by $(g,\epsilon), g\in S, \epsilon=\pm $.  Let
\[ Z_g=Z_{(gS_1,+)}\cdot  Z_{(\bar{S_2}g,-)}, \]
\[ X_g=X_{(S_2g,+)}\cdot X_{(g\bar{S_1},-)}\]
be two stabilizers on each bi-qubit. More explicitly,
\[\begin{aligned}
  Z_g =& \prod_{v\in S_1, w\in S_2} Z_{(gv,+)}\cdot Z_{(w^{-1}g, -)} \\
  X_g =& \prod_{v\in S_1, w\in S_2} X_{(wg, +)}\cdot X_{(gv^{-1}, -)}
\end{aligned}. \]

Then the LR Hamiltonian on $L(G,2,2)$ is defined as
\[ H(G;S_1,S_2)=\sum_{g\in G}\frac{I-Z_g}{2}+\sum_{g\in G}\frac{I-X_g}{2}.\]

\begin{prop} \label{prop:hamiltonian}

\begin{enumerate}

\item Any two stabilizers in $\{Z_g, X_h \mid g,h\in G \}$ commute with each other.
\item $\prod_g Z_g=\prod_g X_g=I$ if both sets $S_i,i=1,2$ are of even parity, i.e. have an even number of elements.
\item If both sets $S_i,i=1,2$ have even parity, then the ground state subspace of $H(G;S_1,S_2)$ is degenerate, i.e. the lowest energy eigenspace has dimension more than one\footnote{The dimension of the ground state manifold is then referred to as the degeneracy.}.

\end{enumerate}

\end{prop}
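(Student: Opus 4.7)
The plan is to treat $H(G; S_1, S_2)$ as a Pauli stabilizer Hamiltonian and handle the three parts in order: first verify the stabilizer group generated by $\{Z_g, X_h\}$ is abelian, then exhibit two explicit product relations in it, then read off degeneracy from the standard rank--dimension formula for stabilizer codes.

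For part (1), commutation among $Z$-type (or among $X$-type) stabilizers is automatic since each such operator is a pure tensor of $Z$'s (resp.\ $X$'s). The content is showing $[Z_g, X_h] = 0$. Since $Z$ and $X$ anticommute precisely at coinciding single-qubit factors, the overall sign is $(-1)^{a+b}$ where $a = |gS_1 \cap S_2 h|$ counts anticommuting overlaps on the $+$-layer and $b = |\bar{S_2}g \cap h\bar{S_1}|$ counts them on the $-$-layer. The key step I would carry out is to parameterize both overlaps by the same set: solving $gv = wh$ on the $+$-layer and $w^{-1}g = hv^{-1}$ on the $-$-layer both rearrange to $w = gvh^{-1}$, so both $a$ and $b$ equal $|\{v \in S_1 : gvh^{-1} \in S_2\}|$, forcing $a + b$ to be even. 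This bijection between the two layer-overlaps is the one actual computation.

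For part (2), I would rearrange $\prod_{g \in G} Z_g$ as iterated products indexed by $v \in S_1$ and $w \in S_2$. Since for each fixed $v \in S_1$ the map $g \mapsto gv$ permutes $G$ (and similarly $g \mapsto w^{-1}g$ for each $w \in S_2$), every $+$-layer site is acted on exactly $|S_1|$ times and every $-$-layer site exactly $|S_2|$ times; evenness of both $|S_i|$ then collapses the product to $I$, and the identical argument handles $\prod_g X_g = I$. For part (3), the Hilbert space has $n = 2|G|$ physical qubits and there are $2|G|$ generating stabilizers; the two relations from (2) (valued at $+I$, crucially not $-I$) cut the $\mbbF_2$-rank of the stabilizer subgroup down to at most $2|G| - 2$. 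The standard stabilizer code dimension formula then yields ground state dimension at least $2^{n - (2|G| - 2)} = 4 > 1$.

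The main obstacle is the commutation check in part (1): recognizing that the $+$-layer and $-$-layer overlap counts are equal---both parameterized by the same algebraic constraint $w = gvh^{-1}$---is the conceptual heart that makes the LR construction consistent, and is the only step requiring more than direct bookkeeping. Parts (2) and (3) reduce, respectively, to a multiplicity count under left/right translation and to the standard stabilizer-code counting argument once (1) is in hand.
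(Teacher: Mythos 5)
Your proposal is correct and follows the same approach as the paper's (very terse) proof: for part~(1) you make explicit the paper's observation that each anticommuting overlap $gv = wh$ on the $+$-layer corresponds bijectively, via $w = gvh^{-1}$, to an anticommuting overlap $w^{-1}g = hv^{-1}$ on the $-$-layer, so the total overlap count is even; parts~(2) and~(3) are the same multiplicity count and standard stabilizer-code rank argument that the paper dismisses with ``a moment's thought'' and ``follows from (2).'' Your remark that the two relations are valued at $+I$ (so the stabilizer group does not contain $-I$) is a worthwhile detail the paper leaves implicit.
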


To prove (1), we simply observe that when some pair $Z_g, X_h$ does not commute, then they anti-commute and $gv=wh$ for some pair $v\in S_1, w\in S_2$.  Then there would be a corresponding anti-commuting pairs $w^{-1}g=hv^{-1}$. Hence  non-commuting pairs appear an even number of times.

The proof of (2) is a moment's thought, and (3) follows from (2).

There is also a generalization to pairs of qudits.  Let $U^a,V^a, a=1,2$, be two operators acting on the $a$-qudit $\mbbC^{d}$ with eigenvalues $\{\omega_d^m\}, m=0,1,...,d-1, \omega_d=e^{\frac{2\pi i}{d}}$, and $U^aV^a=\omega_d V^aU^a$.  Suppose $m_a$ are two functions on $S_a$ with values in $\{1,2,..,d-1\}$, respectively.  Then the LR model on $L(G,2,2)$ can be generalized to qudits $L(G,d,2)$ by replacing $Z_g$ and $X_g$ with
$$A_g=\prod_{v\in S_1, w\in S_2}(U^1_{gv})^{m_1(v)}(U^2_{w^{-1}g})^{m_2(w)},$$ and $$B_g=\prod_{v\in S_1, w\in S_2}(V^1_{wg})^{m_2(w)}(V^2_{gv^{-1}})^{-m_1(v)},$$
respectively.

\subsection{Degeneracy}
Given a collection of qubits $L(V,2,1)=\otimes_{v\in V}\mbbC^2$ indexed by a finite set $V$, a Pauli operator on $L(V,2,1)$ is an operator that is a tensor product of Pauli matrices $\{I,X,Y,Z\}$ acting on the single qubits (note that we include the identity $I$ as a Pauli matrix here).  Up to phases $\pm 1, \pm i$, the Pauli operators can be parameterized by vectors in $\mbbF_2[V_X]\oplus \mbbF_2[V_Z]$, where $V_X,V_Z$ are two copies of $V$ as follows and $\mbbF_2=\mbbZ/2\mbbZ=\{0,1\}$ is the field with two elements.  An $\mbbF_2$-vector $v=(v_i) \in \mbbF_2[V_X],v_i\in \mbbF_2$ represents the Pauli operator $P_v$ acting on the $i^\text{th}$ qubit by $X^{v_i}$, i.e.
\[ P_v = \bigotimes_{i \in V} X^{v_i}. \]
Similarly, an $\mbbF_2$-vector in $\mbbF_2[V_Z]$ represents a Pauli operator acting on qubit $i$ by $Z^{v_i}$.  Then any Pauli operator on $L(V,2,1)$, up to phases, can be written as a composition of $P_v$ and $P_w$ for $\mbbF_2$-vectors $v \in \mbbF_2[V_X], w \in \mbbF_2[V_Z]$, hence can be represented by the vector $(v,w)\in \mbbF_2[V_X]\oplus \mbbF_2[V_Z]$.

Let $V=G \times \{+,-\}$.  Then we will represent the stabilizers $X_g, Z_g$ as vectors in $\mathbb{F}_2[G \times \{+,-\}]^2$.

Any stabilizer code defined by $r$ independent stabilizers on $n$ qubits has code space dimension $2^{n-r}$.  The number of independent stabilizers is the rank of the following matrix whose columns are these vectors.  This matrix can be written in the following form:

\[M_G := \left(
    \begin{array}{c|c}
    S_2g & \\[0.4em]
    g\overline{S_1} \\ \hline
    & gS_1 \\[0.4em]
    & \overline{S}_2g\\
    \end{array}
  \right)\]
where each block is $|G| \times |G|$.

The first $|G|$ columns represent the $X_g$ operators as $g$ varies through $G$ and the second $|G|$ columns likewise represent the $Z_g$ operators.  We have the following obvious statement:

\begin{prop}\label{prop:degeneracy}

If the $(4|G|)\times (2|G|)$ matrix $M_G$ over $\mathbb{F}_2$ has rank $k$, then the degeneracy of the LR Hamiltonian is $2^{2|G|-k}$.
The rank $k$ is always an even integer.

\end{prop}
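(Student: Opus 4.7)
The plan is to dispatch the two assertions in turn: the degeneracy formula by the standard stabilizer-code dictionary, and the evenness of $k$ by a symmetry between the $X$- and $Z$-stabilizer blocks of $M_G$.

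For part (1), Proposition \ref{prop:hamiltonian}(1) tells us that $\{X_g, Z_g\}_{g\in G}$ is a pairwise commuting family of Pauli operators on the $n = 2|G|$ qubits of $L(G,2,2)$, generating an abelian subgroup of the Pauli group. The ground states of $H(G;S_1,S_2)$ are exactly the simultaneous $+1$-eigenstates of these stabilizers, i.e.\ the stabilizer codespace. Because the LR construction is CSS (its generators are pure $X$-type or pure $Z$-type with trivial phase, and any product decomposes accordingly), the stabilizer group cannot contain $-I$ and the codespace is nonempty; the standard formula then gives its dimension as $2^{n-r}$, where $r$ is the $\mbbF_2$-dimension of the stabilizer subspace in the symplectic representation $\mbbF_2^{2n}$. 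The columns of $M_G$ are precisely these symplectic vectors, so $r = k$ and the degeneracy is $2^{2|G| - k}$.

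For part (2), the key structural observation is that $M_G$ is column-block-diagonal: $X_g$-columns are supported only in the $V_X$ row blocks and $Z_g$-columns only in the $V_Z$ row blocks. Writing $A_X$ (respectively $A_Z$) for the $2|G|\times|G|$ submatrix recording the $X$-type (respectively $Z$-type) stabilizers on its supporting block, this gives
\[ k = \rank(A_X) + \rank(A_Z), \]
so it suffices to prove $\rank(A_X) = \rank(A_Z)$. Identifying $A_X, A_Z$ with the $\mbbF_2$-linear maps
\[ A_X \colon v \mapsto (s_2 v,\, v\bar{s_1}), \qquad A_Z \colon v \mapsto (v s_1,\, \bar{s_2}\, v) \]
from $\mbbF_2[G]$ to $\mbbF_2[G]^{\oplus 2}$ (where $s_i := \sum_{w \in S_i} w$), one finds $\ker A_X = \mathrm{rann}(s_2) \cap \mathrm{lann}(\bar{s_1})$ and $\ker A_Z = \mathrm{lann}(s_1) \cap \mathrm{rann}(\bar{s_2})$. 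The anti-involution $\psi(g) = g^{-1}$ of $\mbbF_2[G]$ satisfies $\psi(\mathrm{rann}(s)) = \mathrm{lann}(\bar{s})$ and therefore bijects $\ker A_X$ onto $\mathrm{rann}(s_1) \cap \mathrm{lann}(\bar{s_2})$.

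The main obstacle is matching this with $\ker A_Z$: in the non-abelian case, $\psi(\ker A_X) = \mathrm{rann}(s_1) \cap \mathrm{lann}(\bar{s_2})$ differs from $\ker A_Z = \mathrm{lann}(s_1) \cap \mathrm{rann}(\bar{s_2})$ by swapping which side each annihilator acts from. Closing this gap reduces to the general identity
\[ \dim\!\bigl(\mathrm{rann}(s) \cap \mathrm{lann}(t)\bigr) = \dim\!\bigl(\mathrm{lann}(s) \cap \mathrm{rann}(t)\bigr) \qquad \text{for all } s,t \in \mbbF_2[G]. \]
When $|G|$ is odd, the Wedderburn decomposition $\mbbF_2[G] \cong \prod_\pi M_{n_\pi}(F_\pi)$ reduces this to a direct count in each matrix factor --- for $A, B \in M_n(F)$, both intersections have dimension $(n - \rank A)(n - \rank B)$ --- so the identity holds factor by factor. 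When $|G|$ is even, $\mbbF_2[G]$ is non-semisimple; I expect the identity still holds by leveraging the fact that $\mbbF_2[G]$ is a \emph{symmetric} Frobenius algebra with pairing $\beta(a,b) = [\text{coefficient of } e \text{ in } ab]$, and this case will be the delicate technical step. Once the identity is in hand, $\dim \ker A_X = \dim \ker A_Z$, hence $\rank(A_X) = \rank(A_Z)$ and $k$ is even.
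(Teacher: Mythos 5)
The paper provides no proof of this Proposition; the authors merely declare it ``obvious'' after quoting the stabilizer-code dimension formula, so there is no argument of theirs to compare against. Your treatment goes considerably further than the paper does, but it has an explicit gap you yourself flag.

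Part (1) is fine: the CSS block structure of $M_G$ plus the standard dictionary gives degeneracy $2^{n-r}$ with $n=2|G|$ and $r=k$; nothing to object to.

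Part (2) is where the issue lies. The block decomposition $k=\rank(A_X)+\rank(A_Z)$, the identification of $A_X,A_Z$ with the group-algebra maps $v\mapsto(s_2v,\,v\bar{s_1})$ and $v\mapsto(vs_1,\,\bar{s_2}v)$, and the computation $\ker A_X=\mathrm{rann}(s_2)\cap\mathrm{lann}(\bar{s_1})$, $\ker A_Z=\mathrm{lann}(s_1)\cap\mathrm{rann}(\bar{s_2})$ are all correct, as is the observation that the anti-involution $\psi$ carries $\ker A_X$ to $\mathrm{lann}(\bar{s_2})\cap\mathrm{rann}(s_1)$, reducing the claim to
\[ \dim\bigl(\mathrm{rann}(s)\cap\mathrm{lann}(t)\bigr)=\dim\bigl(\mathrm{lann}(s)\cap\mathrm{rann}(t)\bigr)\qquad\text{in }\mathbb{F}_2[G]. \]
Your Wedderburn/matrix-ring calculation for $|G|$ odd is correct: for $A,B\in M_n(F)$ both sides count the space of matrices with column space in one fixed subspace and row space in another, hence equal $(n-\rank A)(n-\rank B)$.

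The genuine gap is exactly the case $2\mid|G|$, which you leave as ``I expect the identity still holds.'' This is not a cosmetic loose end: it is precisely the modular case, which includes all the dihedral-group examples the paper works out and, more to the point, every Galois tower of a $3$-manifold has even-order quotients cofinally, so the even case is the generic one for this program. The symmetric-Frobenius structure you gesture at (with $\beta(a,b)=\langle ab,e\rangle$ and $L_s^{*}=R_s$, $R_s^{*}=L_s$) gives the useful dualities $(\mathrm{rann}(s))^{\perp_\beta}=As$ and $(\mathrm{lann}(s))^{\perp_\beta}=sA$, and hence recasts the needed identity as $\dim(aA+Ab)=\dim(Aa+bA)$, or equivalently $\dim(aA\cap Ab)=\dim(Aa\cap bA)$; but none of the obvious moves (applying $\psi$, taking $\beta$-adjoints, or the transpose relation $A_X^{T}=(L_{\bar{s_2}}\mid R_{s_1})$ against $A_Z=(R_{s_1};L_{\bar{s_2}})$) closes the loop --- each one cycles back to an equivalent statement. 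To make the argument a proof you would need to either (i) prove the annihilator identity directly for modular group algebras, or (ii) replace it by a weaker parity statement $\rank A_X\equiv\rank A_Z\pmod 2$, which is all the Proposition actually needs; as written, the even case remains unproven, so part (2) of the statement is not yet established by your argument.
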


\subsection{Examples of degeneracy}

\subsubsection{Quartic interactions}
We begin with a result for a rather general family of LR models---those with $|S_1|=|S_2|=2$.

\begin{prop}
Consider a LR model where $G$ is any finite group and $S_1=\{1,s\}, S_2=\{1,t\}$, where $s,t$ are nontrivial elements of $G$.  The ground state degeneracy of $H_G$ is
\[ 4^{|\langle t \rangle \backslash G/\langle s \rangle|}, \]
where $\langle t \rangle \backslash G/\langle s \rangle$ is the set of double cosets.  In particular, there are $2\cdot|\langle t \rangle \backslash G/\langle s \rangle|$ logical qubits, and
\[ |\langle t \rangle \backslash G/\langle s \rangle| = |G/\langle s,t\rangle| \]
if $\langle s \rangle$ is a normal subgroup of $G$.
\label{prop:1s1t}
\end{prop}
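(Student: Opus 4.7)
The plan is to apply Proposition \ref{prop:degeneracy}, so the task reduces to computing $\rank M_G$ over $\mathbb{F}_2$. The key structural observation is that the X-columns (first $|G|$ columns) and Z-columns (last $|G|$ columns) have disjoint row supports, so $\rank M_G = \rank M_G^X + \rank M_G^Z$, where $M_G^X$ is the upper-left $2|G|\times |G|$ block and $M_G^Z$ is the lower-right $2|G|\times |G|$ block. It therefore suffices to compute the kernel of each.

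Identify $\mathbb{F}_2[G\times \{+,-\}]$ with $\mathbb{F}_2[G]\oplus \mathbb{F}_2[G]$ and interpret each block as a linear endomorphism of $\mathbb{F}_2[G]$. With $S_1=\{1,s\}$ and $S_2=\{1,t\}$, the four blocks become $I+L_t$, $I+R_{s^{-1}}$, $I+R_s$, $I+L_{t^{-1}}$, where $L_t$ and $R_s$ are the left- and right-multiplication permutation matrices. Thus
\[
M_G^X=\begin{pmatrix} I+L_t\\ I+R_{s^{-1}}\end{pmatrix},\qquad
M_G^Z=\begin{pmatrix} I+R_s\\ I+L_{t^{-1}}\end{pmatrix}.
\]
A vector $c\in\mathbb{F}_2[G]$ lies in $\ker M_G^X$ iff $c_g=c_{t^{-1}g}$ and $c_g=c_{gs}$ for all $g\in G$, i.e.\ $c$ is constant on each double coset $\langle t\rangle g\langle s\rangle$. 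Hence $\dim\ker M_G^X=|\langle t\rangle\backslash G/\langle s\rangle|$ and $\rank M_G^X=|G|-|\langle t\rangle\backslash G/\langle s\rangle|$. The same computation (or the observation that replacing $(s,t)$ by $(s^{-1},t^{-1})$ is an invertible relabeling) gives $\rank M_G^Z=|G|-|\langle t\rangle\backslash G/\langle s\rangle|$. Adding and plugging into Proposition \ref{prop:degeneracy} yields degeneracy $2^{2|G|-k}=4^{|\langle t\rangle\backslash G/\langle s\rangle|}$, which is $2^{2|\langle t\rangle\backslash G/\langle s\rangle|}$ and so corresponds to $2|\langle t\rangle\backslash G/\langle s\rangle|$ logical qubits.

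For the final identity, assume $\langle s\rangle\trianglelefteq G$. Then $g\langle s\rangle=\langle s\rangle g$, so
\[
\langle t\rangle g\langle s\rangle=\langle t\rangle\langle s\rangle g,
\]
and $\langle t\rangle\langle s\rangle$ is a subgroup of $G$ equal to $\langle s,t\rangle$. Consequently the double cosets in $\langle t\rangle\backslash G/\langle s\rangle$ are precisely the right cosets of $\langle s,t\rangle$ in $G$, giving $|\langle t\rangle\backslash G/\langle s\rangle|=[G:\langle s,t\rangle]=|G/\langle s,t\rangle|$.

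The main obstacle I anticipate is purely bookkeeping: being careful that "column $g$ of the block $S_2g$" means the indicator of $\{g,tg\}$ rather than of $\{g,t^{-1}g\}$, and similarly for the $gS_1$, $g\bar S_1$, $\bar S_2 g$ blocks, so that the $\ker M_G^X$ computation correctly yields \emph{left}-$\langle t\rangle$ times \emph{right}-$\langle s\rangle$ invariance (i.e.\ genuine double cosets). Everything else is linear algebra over $\mathbb{F}_2$ and elementary group theory.
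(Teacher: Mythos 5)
Your proof is correct and reaches the same destination as the paper---reducing to $\rank M_G$ via Proposition \ref{prop:degeneracy} and identifying the kernel dimension with the number of double cosets---but your execution is noticeably cleaner. The paper's argument is a hands-on ``chase'': starting from a nonzero entry of $Z_g$, it follows forced cancellations around the orbit to conclude that any dependence involving $Z_g$ must be supported on the whole double coset, and it then needs the informal observation that these minimal dependencies do not overlap. You instead identify each $|G|\times|G|$ block of $M_G^X$ and $M_G^Z$ with $I$ plus a left- or right-translation permutation matrix, so that $c\in\ker$ is literally a function on $G$ invariant under the left $\langle t\rangle$- and right $\langle s\rangle$-actions, i.e.\ constant on double cosets; the kernel dimension is then read off with no chase and no claim about rigidity of minimal dependencies to justify. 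This buys a more transparent proof that would also generalize uniformly if $S_1,S_2$ were replaced by other two-element sets, and your preliminary observation that the $X$- and $Z$-columns of $M_G$ have disjoint row support (so the rank splits additively) makes explicit something the paper uses implicitly. The normality argument at the end is the same in both.
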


\begin{proof}
In order to apply Proposition \ref{prop:degeneracy}, we need to compute the rank of the matrix $M_G$.  We first show
\[ \rank \left(\begin{array}{c} gS_1 \\[0.5em] \overline{S_2}g \end{array}\right) = |G| - |\langle t \rangle \backslash G/\langle s \rangle| \]
It will be clear from a similar argument that
\[ \rank \left(\begin{array}{c} S_2g \\[0.5em] g\overline{S_1} \end{array}\right) = |G| - |\langle s \rangle \backslash G/\langle t \rangle| = |G|-|\langle t \rangle \backslash G/\langle s \rangle|,\]
whence the proposition follows.

Let $g \in G$ and consider the column corresponding to $Z_g$.  We abuse notation and continue to denote this column by $Z_g$.  We show that there is a linear dependence supported on the columns indexed by elements in the double coset $\langle t\rangle \backslash g / \langle s \rangle$, that is
\[\sum_{h \in \langle t\rangle \backslash g / \langle s \rangle} Z_h = 0. \]

Most of the entries of the $Z_g$ column are 0, but exactly 4 are nonzero:
\[ (g1,+1),(gs,+1),(1g,-1), \text{ and } (t^{-1}g,-1). \]
By our choice of $S_1$ and $S_2$, every row has precisely two entries that are a 1, so if $Z_g$ is involved in a dependence, there is precisely one other column $Z_h$ that can cancel a given nonzero entry of $Z_g$.  So, for example, since $Z_g$ has a nonzero entry at $(gs,+1)$, it must be cancelled by some other column.  The only possibility is $Z_{gs}$.  Similarly, since $Z_{gs}$ has a nonzero entry at $(gs^2,+1)$, it must be canceled by $Z_{gs^2}$.  Continuing in this way, we see that $Z_{gs^k}$ must be in the dependence for all $k=1,\dots,|s|$.  Moreover, it's clear that the nonzero entries of these $Z_{gs^k}$ in the +1 layer cancel in pairs.

On the other hand, since $Z_{gs^k}$ also has a nonzero entry at $(t^{-1}gs^k,-1)$, this must be canceled by $Z_{t^{-1}gs^k}$, whose nonzero entry at $(t^{-2}gs^k,-1)$ must be canceled by $Z_{t^{-2}gs^k}$, and we continue as before to get cancellation in pairs among the nonzero entries in the $-1$ layer of the $Z_{t^{-j}gs^k}$, where $k$ is fixed and $j=1,\dots,|t|$.  Returning to the +1 layer, the same argument as above shows that for a fixed $j$, the nonzero entries of the $Z_{t^{-j}gs^k}$ as $k$ varies through $1,\dots,|s|$ cancel in pairs, so we do not need to introduce any more columns to guarantee everything cancels.   This shows that every column is involved in a linear dependence supported on the columns in the double coset $\langle s\rangle \backslash g / \langle t \rangle$.

In fact, our argument also shows that any linear dependence that involves the column $Z_g$ must involve all of the columns in the double coset represented by $g$.  Thus, there are $|\langle s\rangle \backslash G / \langle t \rangle|$ minimal linear dependencies, where \emph{minimal} means that the linear dependence has support as small as possible.  The double cosets partition $G$, so the minimal dependencies do not share any support, and, hence, are themselves independent.  This shows
\[ \rank \left(\begin{array}{c} gS_1 \\[0.5em] \overline{S_2}g \end{array}\right) = |G| - |\langle t \rangle \backslash G/\langle s \rangle| \]
as claimed.
\end{proof}

We unpack this proposition in some specific examples below.

\subsubsection{Cyclic Groups}
Let us consider LR codes on finite cyclic groups $G=\mbbZ/n\mbbZ$, which we can interpret as translation-invariant finite discretizations of the circle $S^1$.

First, fix $a,b \in G$ nontrivial group elements, and consider an LR code with $S_1=\{0,a\}, S_2=\{0,b\}$.  Then, by Proposition \ref{prop:1s1t}, the ground state degeneracy of $H_G$ is  $4^{\gcd(a,b,n)}$.  See Figure \ref{f:example} for an example.

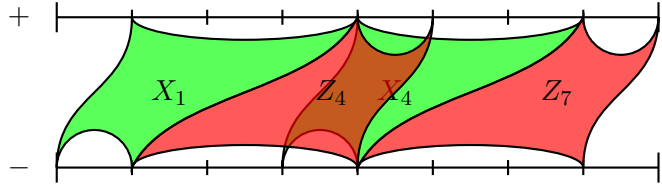
\begin{figure}
\begin{tikzpicture}
\draw[thick] (0,+1) -- (8,+1);
\draw[thick] (0,-1) -- (8,-1);
\foreach \i in {1,2,...,7}
	{\draw[thick] (\i,+1.1) --(\i,+0.9);
	\draw[thick] (\i,-1.1) --(\i,-0.9);
	}
	
\foreach \i in {1,4}
	{
	\draw[thick,fill=green,fill opacity=0.65] (\i,+1) arc[x radius = 1.5, y radius=0.3, start angle = 180, end angle = 360] .. controls (\i+2.5,0) and (\i+0.5,0).. (\i,-1) arc[x radius = 0.5, y radius = 0.5, start angle = 0, end angle=180] .. controls (\i-1,0) and (\i,0) .. (\i,+1);
	\draw (\i+0.5,0) node {$X_\i$};
}

\foreach \i in {4,7}
	{
	\draw[thick,fill=red,fill opacity=0.65] (\i,+1) arc[x radius = 0.5, y radius = 0.5, start angle=180, end angle = 360] .. controls (\i+1,0) and (\i,0) .. (\i,-1) arc[x radius = 1.5, y radius = 0.3, start angle = 0, end angle = 180] .. controls (\i-2.5,0) and (\i-0.5,0) .. (\i,+1);
	\draw (\i-.35,0) node {$Z_\i$};
}
\draw[thick] (0,+1.2) --(0,+0.8);
\draw[thick] (0,-1.2) --(0,-0.8);
\draw[thick] (8,+1.2) --(8,+0.8);
\draw[thick] (8,-1.2) --(8,-0.8);
\draw (-.5,+1) node {$+$};
\draw (-.5,-1) node {$-$};
\end{tikzpicture}
\caption{Schematic example showing some of the terms in an LR code Hamiltonian with $G=\mbbZ/8\mbbZ$, $S_1=\{0,1\}$ and $S_2=\{0,3\}$.  The figure should be understood to have periodic boundary conditions.  Note that any time a green $X$ cell and a red $Z$ cell share a vertex, they in fact share an even number of vertices.}
\label{f:example}
\end{figure}

For a less-ordered example, we present a code in a finite cyclic group with more interesting degeneracy.  Let $n = 2^k-1$ for some integer $k > 0$, and our code will be in $\mbbZ/n\mbbZ$. Set
\[ S_1 = \overline{S_2} = \{0, 1, 3, 7, \ldots, 2^{k-1} - 1\} \]
Note that in this construction, the $gS_1$ block and $\overline{S_2}g$ block are identical. Then the rank of $M_G$ is $2 \cdot 2^{k-1}$, and so the degeneracy is $4^{2^{k-1}-1}$. We have a linear dependence between the following columns: $Z_0, Z_1, Z_2, \ldots, Z_{2^{k-1}}$. Then for $1 \leq i < j < k$, columns $i$ and $j$ each have a nonzero entry at $2^i+2^j -1$. Meanwhile, columns 0 and $i$ have nonzero entries at $2^i - 1$. Thus for column $Z_{2^i}$, for $i \neq j$, the entry at $2^i + 2^j - 1$ is canceled by column $Z_{2^j}$, and the entry at $2^{i+1} - 1$ is canceled by column $Z_0$.

Furthermore, for any $0 \leq l < 2^{k-1}-1$, $Z_{l + 2^{k-1}}$ can be written as a linear combination of $Z_{l}, Z_{l+1}, Z_{l+2}, \ldots, Z_{l+2^{k-1}}$. Thus, only vectors $Z_0$ through $Z_{2^{k-1}-1}$ are linearly independent, so
\[ \rank \left(\begin{array}{c} gS_1 \\[0.5em] \overline{S_2}g \end{array}\right) = 2^{k-1} \]
and $\rank M_G = 2 \cdot 2^{k-1}$.

For example, in $\mbbZ/31\mbbZ$, $S_1 = \overline{S_2} = \{0, 1, 3, 7, 15\}$
and the following chart shows the locations of nonzero entries of the vectors $Z_0, Z_1, \ldots, Z_{16}$:
\[ \begin{array}{cccccc}
     Z_0 &  Z_1 &  Z_2 &  Z_4 &  Z_8 & Z_{16}  \\ \hline
     0 &  1 &  2 &  4 &  8 & 16  \\
     1 &  2 &  3 &  5 &  9 & 17  \\
     3 &  4 &  5 &  7 & 11 & 19  \\
     7 &  8 &  9 & 11 & 15 & 23  \\
    15 & 16 & 17 & 19 & 23 &  0  \\
\end{array} \]
Excepting $Z_0$, the remaining $k$ vectors form a symmetric square matrix. The diagonal entries are canceled by the entries of $Z_0$.

\subsubsection{Dihedral groups}
Now consider the dihedral group of order $2n$
\[ G=D_{n} = \langle r, s \mid r^n = s^2=1, srs=r^{-1} \rangle \]
with $S_1=\{1,r\}$ and $S_2=\{1,s\}$.  Then the subgroup of rotations $\langle r \rangle$ is normal, so $|\langle s \rangle \backslash G / \langle r \rangle| = |\langle r,s \rangle \backslash G|$ and $\langle r,s\rangle = G$ so the degeneracy of $H_G$ is 4.

We have seen by some of the previous examples that normality of the subgroup generated by $S_1$ makes computing degeneracy easier.  The following result is another way in which this is true.

\begin{lem}
Consider a LR model where $G$ is any finite group and $S_1=S_2=N$, where $N$ is some normal subgroup of $G$. The ground state degeneracy of $H_G$ is
\[ 4^{|G| - |G/N|} \]
\end{lem}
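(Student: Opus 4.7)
The plan is to invoke Proposition \ref{prop:degeneracy} and compute $\rank M_G$ directly. Since the required degeneracy is $4^{|G|-|G/N|} = 2^{2|G| - 2|G/N|}$, it suffices to show that $\rank M_G = 2|G/N|$.

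First I would exploit the block structure of $M_G$. The $X_g$ columns have entries only in the \emph{$X$-part} of the row index (the top two row blocks), while the $Z_g$ columns only touch the \emph{$Z$-part} (the bottom two blocks). Consequently the matrix is block diagonal, and
\[ \rank M_G = \rank\!\left(\begin{array}{c} S_2 g \\[0.3em] g\overline{S_1} \end{array}\right) + \rank\!\left(\begin{array}{c} gS_1 \\[0.3em] \overline{S_2} g \end{array}\right). \]
It therefore suffices to show each summand equals $|G/N|$.

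Next I would use the hypothesis $S_1 = S_2 = N$ together with the normality of $N$. Because $N$ is a subgroup, $\overline{N} = N$; because it is normal, $gN = Ng$ for every $g \in G$. Both blocks in, say, the $Z$-column computation therefore have the \emph{same} support: the column $Z_g \in \mbbF_2[G\times\{+,-\}]$ is the indicator vector of $gN \times \{+,-\}$. This immediately yields two observations: (i) if $g' \in gN$, then $Z_{g'} = Z_g$ (the columns are literally equal, since we are working over $\mbbF_2$ and all entries are $0$ or $1$); and (ii) if $g' \notin gN$, then $Z_{g'}$ and $Z_g$ have disjoint support, so their restriction to any common coset gives linear independence. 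Hence among the $|G|$ columns $Z_g$ there are exactly $|G/N|$ distinct ones, forming a linearly independent set, so the right block has rank $|G/N|$. The identical argument (with the roles of $+$ and $-$ layers swapped) gives rank $|G/N|$ for the $X$ block.

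Assembling the pieces gives $\rank M_G = 2|G/N|$, and Proposition \ref{prop:degeneracy} yields the claimed degeneracy $4^{|G|-|G/N|}$. There is no real obstacle here; the only step that deserves a sentence of care is pointing out that $N$ being a normal \emph{subgroup} (not merely a subset) is what forces $\overline{N} = N$ and $gN = Ng$, which is what causes the $+$-layer support and $-$-layer support of each column to coincide and reduces the rank computation to counting cosets.
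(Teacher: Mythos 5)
Your proof is correct and follows essentially the same approach as the paper's: normality of $N$ forces $Z_{g_1}=Z_{g_2}$ exactly when $g_1N=g_2N$, so the $Z$-block (and symmetrically the $X$-block) has $|G/N|$ distinct columns, one per coset, which are independent because columns from distinct cosets have disjoint support. The paper's proof is slightly terser---it records the coset-equality observation and asserts the rank without spelling out the disjoint-support independence step or the block-diagonal reduction---but these are exactly the details you filled in, so the two arguments coincide.
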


\begin{proof}
Following the notation of the proof of Proposition \ref{prop:1s1t}, $Z_g$ has ones in precisely the spots $(gn,+)$ and $(ng,-)$ for $n \in N$. Then if $g_1,g_2 \in G$ with $g_1 g_2^{-1} \in N$, then $Z_{g_1} = Z_{g_2}$. Thus, for each coset of $G/N$, there is a single independent stabilizer $Z$ and so
\[ \rank \left(\begin{array}{c} gS_1 \\[0.5em] \overline{S_2}g \end{array}\right) = |G/N| \]
\end{proof}

As a consequence of this lemma, we can find codes with high degeneracy without requiring the parities of $|S_1|$ and $|S_2|$ to be even. For example, for odd $n$, the rotation subgroup $\langle r \rangle$ has odd order, and taking $S_1=S_2=\langle r \rangle$ yields a code with $4n-4$ logical qubits.

\subsubsection{Toric code}

Normal subgroups of $\mbbZ^2$ are of the form $m\mbbZ \times l\mbbZ$, which are exactly the usual periodic boundary conditions for doubly periodic functions on $\mbbR^2$.  The quotient groups $\mbbZ/{m\mbbZ}\times \mbbZ/{l\mbbZ}=\mbbZ^2/(m\mbbZ \times l\mbbZ)$ are the usual lattice tori.  It is a general theorem that any topological order realized by translation invariant Pauli Hamiltonian schemas of CSS form is equivalent to some copies of the toric code \cite{Ha16}.

Concretely, we can recover the toric code from LR models as follows.  The number of edges of the lattice torus $\mbbZ/{m\mbbZ}\times \mbbZ/{l\mbbZ}$
is exactly twice the number of vertices.  In the standard formulation of the toric code, the qubits live on the edges or bonds of the lattice.  We can describe this equivalently by using bi-qubits at the vertices: let a qubit on a ``vertical" edge become the $+$ qubit on the vertex at the bottom of the edge, and let a qubit on a ``horizontal" edge become the $-$ qubit on the vertex at the left of the edge.  Then, if $x,y$ denote generators of $\mbbZ^2$, choosing $S_1=\{1,x\}, S_2=\{1,-y\}$ in the LR model recovers exactly the toric code.

\subsubsection{Haah codes}

The Haah codes can be recovered by choosing
\[ S_1=\{1,x,y,z\} \text{ and } S_2=\{1,xy,yz,xz\} \]
where $x,y,z$ are the generators of $\mbbZ^3$, and then looking at the image of $S_1$ and $S_2$ inside a finite quotient of $\mbbZ^3$.  The group $\mbbZ^3$ is the fundamental group of the three torus $T^3$, whose universal cover is the Euclidean three space $\mathbb{R}^3$.  Finite sheeted covers of $T^3$ are always homeomorphic to $T^3$, which is one place to interpret where the Haah codes with periodic boundary conditions live.

It would be interesting to study other choices of $S_i$. For example, are there appropriate choices of $S_i$ that result in two copies of the 3D toric code?

\subsection{Excitations}

Haah codes are called type II fracton models.  In fracton models, point-like  excitations are in general sub-dimensional in the sense that they can move only on lower dimension sub-manifolds of the ambient space manifold. In Haah codes, minimal point-like excitations have the shape of tetrahedra, so it is natural to speculate that minimal point-like excitations in our generalizations would have the shape of the two subsets $S_1$ and $S_2$.  In the following, we will see that this speculation is wrong; how the shape of the minimal point-like excitations depends on $S_1$ and $S_2$ is more complicated.

\subsubsection{Spectrum for Abelian Groups with $|S_1|=|S_2|=2$}

\begin{prop}\label{prop:spectrum}
    If $G$ is a finite abelian group, with $S_1 = \{0,s\}$ and $S_2 = \{0,t\}$ such that $G = \langle s, t \rangle$, then the dimension of the eigenspace for eigenvalue $2i$, $0 \leq 2i \leq |G|$ is
    \[ 4 \cdot \sum_{a+b=i} \binom{n}{2a} \binom{n}{2b} \]
    The dimension of every odd eigenspace is 0.
\end{prop}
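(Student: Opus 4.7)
The plan is to exploit the fact that $\{Z_g, X_g\}_{g \in G}$ is a commuting family of $\pm 1$-valued involutions, so $L(G,2,2)$ decomposes as an orthogonal direct sum of simultaneous eigenspaces indexed by admissible sign patterns $(\epsilon^Z,\epsilon^X) \in \{\pm 1\}^G \times \{\pm 1\}^G$. On such an eigenspace the Hamiltonian acts by the scalar
\[ \sum_{g \in G} \frac{1-\epsilon_g^Z}{2} + \sum_{g \in G} \frac{1-\epsilon_g^X}{2} = \#\{g: \epsilon_g^Z = -1\} + \#\{g: \epsilon_g^X = -1\}, \]
so the problem reduces to (a) characterizing the admissible patterns and (b) computing the common dimension of each joint eigenspace.

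For (a), I would specialize Proposition \ref{prop:1s1t} to our setting: since $G$ is abelian and $G = \langle s,t \rangle$, the double coset space $\langle t\rangle \backslash G/\langle s\rangle$ collapses to $G/\langle s,t\rangle = \{*\}$. The argument given there therefore shows that the only $\mathbb{F}_2$-dependence among the columns of the $Z$-block of $M_G$ is the global relation $\prod_g Z_g = I$, and symmetrically for the $X$-block. Because the two blocks occupy disjoint rows of $M_G$, these two relations are independent, so the admissible sign patterns are exactly the pairs $(\epsilon^Z,\epsilon^X)$ with $\prod_g \epsilon_g^Z = \prod_g \epsilon_g^X = +1$. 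There are $2^{|G|-1} \cdot 2^{|G|-1} = 2^{2|G|-2}$ such patterns, and since $\dim L(G,2,2) = 2^{2|G|}$, each joint eigenspace has dimension exactly $4$; this handles (b) and is consistent with the ground-state degeneracy already computed.

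It then remains to tally multiplicities. The two parity constraints force $\#\{g: \epsilon_g^Z = -1\} = 2a$ and $\#\{g: \epsilon_g^X = -1\} = 2b$ to be even, so the eigenvalue on the $(\epsilon^Z,\epsilon^X)$-eigenspace equals $2(a+b)$; in particular all odd eigenspaces vanish. A free choice of which $2a$ of the $|G|$ coordinates of $\epsilon^Z$ are $-1$, and likewise for $\epsilon^X$, gives exactly $\binom{|G|}{2a}\binom{|G|}{2b}$ admissible patterns with a prescribed $(a,b)$; summing over $a+b=i$ and multiplying by the eigenspace dimension $4$ yields the stated formula (with $n = |G|$). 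The main obstacle is really just the rank calculation for the $Z$- and $X$-blocks of $M_G$, which is a direct specialization of the double-coset argument in Proposition \ref{prop:1s1t}; once this is in hand, the rest is bookkeeping with binomial coefficients.
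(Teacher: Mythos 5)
Your argument is correct, and it reaches the formula by a cleaner, more algebraic route than the paper's. The paper argues physically: excitations are created in pairs by single Pauli $X$'s or $Z$'s (so each type has even parity), the hypothesis $\langle s,t\rangle = G$ lets one transport any single excitation to any site, so every parity-even pattern of each type is achievable, and then it multiplies by the ground-state degeneracy of $4$ on the grounds that the Pauli operators carrying a ground state to a given excited pattern send the four ground states to four distinct states. You instead go through the stabilizer group structure directly: identify the rank of $M_G$ as $2(|G|-1)$ via a specialization of Proposition~\ref{prop:1s1t}, conclude that the only $\mathbb{F}_2$-relations among the stabilizers are $\prod_g Z_g = I$ and $\prod_g X_g = I$ (which live in disjoint row blocks and are therefore independent), infer that the admissible joint sign patterns are exactly those with even $Z$-parity and even $X$-parity, and then compute the constant $4$ for the dimension of every joint eigenspace by the dimension count $2^{2|G|}/2^{2|G|-2}$. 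This buys a couple of things the paper leaves implicit: that \emph{every} parity-even pattern really occurs (you get it for free from the rank, rather than a transport argument), and that each joint eigenspace has dimension exactly $4$ rather than merely at least $4$ (the paper's phrasing asserts but does not justify that the images of the four ground states span the whole eigenspace). Both proofs rely on the same two facts from earlier in the paper; yours deploys them through the commuting-Pauli eigenspace decomposition while the paper deploys them through an operational picture of creating and moving fractons. I'd only add that when you invoke ``the admissible sign patterns are exactly the pairs with $\prod_g \epsilon^Z_g = \prod_g \epsilon^X_g = +1$'', it is worth one sentence to note that surjectivity onto this set of patterns follows from the transitive action of the Pauli group on joint eigenspaces (equivalently, from your dimension count), since a priori the rank argument only bounds the admissible patterns from above.
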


\begin{proof}
    For odd eigenvalues, the result follows from Proposition \ref{prop:hamiltonian}.

    For even eigenvalues, we observe that particles come in pairs, and there are two types corresponding to whether we are applying Pauli X or Pauli Z at a location. $\langle s, t \rangle = G$ implies that Pauli X's or Z's can transport any single particle to any other location. Thus, we can pick any even number of locations for each type of particle, and this will have total energy equal to the total number of particles, giving us the sum over binomial coefficients. The factor 4 is since our ground state degeneracy is 4 (Proposition \ref{prop:1s1t}), and these operators send the ground states to distinct excited states.
\end{proof}

\subsubsection{Minimal Excitation}

There is no simple resolution to the question: is there a relationship between the sizes of the sets $S_1$ and $S_2$, and energy of the minimal excitation? In Proposition \ref{prop:1s1t} and Proposition \ref{prop:spectrum}, we found that if $|S_1|=|S_2|=2$, then the ground state degeneracy is 4, and the minimal excitation has energy 2.

However, we can take a simple example in $G = D_3$ (the dihedral group on 6 elements), with $S_1 = \overline{S_2} = \{1, r, s\}$ where $s$ is a reflection and $r$ is a generator of the rotation subgroup of order $3$. A straightforward computation using $M_G$ says that this code has ground state degeneracy 4. However, the minimal excitation, despite having sets of size 3 and ground state degeneracy 4, has energy 1.

We can confirm this by hitting $(sr,+)$, $(sr,-)$, and $(sr^2,-)$ with Pauli X's. This set is incident to each $Z_{s}, Z_{sr}, Z_{r}, Z_{sr^2}$ twice, and incident to $Z_r$ once -- thus after these 3 operations, there is a single particle at $r$.

\section{LR models on general groups}

One of the salient features of the Haah codes is the seemingly random pattern of ground state degeneracies for various boundary conditions.  In this section, we organize the various boundary conditions as the profinite completion of the group, and, in the next section, we define a profinite low energy limit of the LR models.  A natural way to define a quantum system on any group $\Gamma$ using the quotients of all its FINs $N$ is to take a limit over them.

\subsection{Directed set of FINs and profinite completion}

A \emph{directed set} is a partially ordered set (poset) $(I, <)$ such that for every $\alpha, \beta \in I$, there exists a $\gamma\in I$ such that $\alpha, \beta \leq \gamma$.  If $I$ is a directed set, then an \emph{inverse system} over $I$ in a category is a family of objects $\{X_\alpha\}_{\alpha \in I}$ and a family of morphisms $f_{\beta \alpha }: X_\beta \rightarrow X_\alpha$ whenever $\alpha \leq \beta$, such that:
\begin{enumerate}
\item $f_{\alpha \alpha}=\textrm{id};$
\item $f_{\gamma \beta}f_{\beta \alpha}=f_{\gamma \alpha}$, whenever $\alpha \leq \beta \leq \gamma.$
\end{enumerate}
An inverse system as above will be denoted as $(X_\alpha, f_{\beta \alpha}, I)$.

\begin{defn}
The inverse (or projective) limit of an inverse system $(X_\alpha, f_{\beta \alpha}, I)$ is the set
\[ \ulalim X_\alpha=\left\{(x_\alpha)\in \prod_{\alpha \in I}X_\alpha \mid f_{\beta \alpha} (x_\beta)=x_\alpha \textrm{ whenever}\; \alpha \leq \beta \right\}.\]
\end{defn}

Given any group $\Gamma$, the collection of all of its FINs
\[ \mathcal{N}(\Gamma)=\{N \mid N\unlhd \Gamma, |\Gamma/N| < \infty\} \]
forms a directed set where the order relation $M \leq_d N$ is given by {\em reverse} inclusion $M \supseteq N$.  We note that $\mathcal{N}(\Gamma)$ is indeed a directed set since, given two FINs $M,N \in \mathcal{N}(\Gamma)$, they both contain their intersection $N\cap M$, which is a FIN of $\Gamma$.

Given $N\in \mathcal{N}(\Gamma)$, we let
\[G_N = \Gamma/N \]
denote the finite quotient of $\Gamma$ by $N$.  If $N\subseteq M$, then there is a short exact sequence
\[ 1\rightarrow M/N \rightarrow G_N \rightarrow G_M \rightarrow 1,\]
where the map $\rho_{N,M}: G_N \rightarrow G_M$ is the natural projection.  In particular, the set of finite quotients of $\Gamma$ forms an inverse system of finite groups over the directed set $\mathcal{N}(\Gamma)$.  We denote this inverse system by $\textrm{FIN}(\Gamma)=(G_N, \rho_{N,M}, \mathcal{N}(\Gamma))$.

(In words, we read $\textrm{FIN}(\Gamma)$ as the ``inverse system of finite quotients of $\Gamma$," so that we use FIN in two separate but dual ways: FINite quotient when talking about $G_N$ vs. Finite Index Normal when talking about $N$.)

\begin{defn}
The projective limit of $\textrm{FIN}(\Gamma)=(N, \rho_{N,M}, \mathcal{N}(\Gamma))$ is called the \emph{profinite completion} $\widehat{\Gamma}$ of $\Gamma$.
\end{defn}

Concretely, the profinite completion is
\[ \widehat{\Gamma}=\left\{(g_N N)_N \in \prod_{N\in \textrm{FIN}(\Gamma)} \Gamma/N \,\middle|\, \rho_{N,M}(g_N N)=(g_M M)\right\}. \]
That is, an element of $\widehat{\Gamma}$ is a sequence of elements in the finite quotients $\Gamma/N$ that satisfies compatibility with respect to $\rho_{N,M}$.  There is a natural homomorphism
\[ \begin{aligned}
j: \Gamma &\rightarrow \widehat{\Gamma} \\
\gamma &\mapsto (\gamma N)_N . \end{aligned} \]
Importantly, $j$ is not necessarily an injection.  The map $j$ is an embedding if and only if $\Gamma$ is residually finite.

An infinite sequence of nested FINs
\[ N_1 \supset N_2 \supset N_3 \supset \cdots \]
will be called a \emph{FIN sequence}.  If $\cap_{i\in \mathbb{N}}N_i=1$, we say the sequence is \emph{cofinal}.  If $\Gamma$ is not residually finite, then it does not have any cofinal FIN sequence.

The profinite completion $\widehat{\Gamma}$ has a natural topology---called the profinite topology---that makes it into a totally disconnected, compact topological group: each finite quotient $\Gamma/N$ is equipped with the discrete topology, the product of all of the $\Gamma/N$ is equipped with the product topology, and $\widehat{\Gamma}$ inherits a topology as a subspace.  We note that the product
\[ \prod_{N\in \textrm{FIN}(\Gamma)} \Gamma/N\]
is homeomorphic to one of the most famous fractals: the Cantor set.  Thus, being a subset of a fractal, $\widehat{\Gamma}$ is always inherently fractal; often $\widehat{\Gamma}$ itself has the topology of a Cantor set.

While $j$ need not be injective, the image $j(\Gamma)$ is always dense in $\widehat{\Gamma}$.  Therefore, the profinite completion $\widehat{\Gamma}$ can be thought of as an analogue of taking the closure of a set by including all limit points of sequences.

\subsubsection{Example: profinite completion of the integers $\mbbZ$}
When $\Gamma=\mbbZ$, the partial order on $\mathcal{N}(\mbbZ)$ is defined by division.  That is, $m\mbbZ \leq_d n\mbbZ$ if and only if $m|n$.  For example $p<_d p^2$ as $p^2 \mathbb{Z}\subset p \mathbb{Z} \subset \mathbb{Z}$ for a prime $p$.

By definition
\[ \begin{aligned}
\widehat{\mbbZ}&=\ulalim_{n}\mbbZ/n\mbbZ \\
&=\left\{(a_n)^{\infty}_{n=1}\in \prod_{n=1}^\infty (\mbbZ/n\mbbZ) \mid a_m \equiv a_n \pmod*{n}, \forall n|m\right\}.
\end{aligned} \]
An integer $a$ is included in $\widehat{\mbbZ}$ as $j(a) = (a_n)$, where $a_n$ is its reduction modulo $n$.  In fact, by basically the Chinese remainder theorem,
\[\widehat{\mbbZ}\cong \prod_{p\;\textrm{prime}}\mbbZ_p,\]\footnote{In this paper we follow the number theoretical convention that the finite cyclic group with $p$ elements of $\mathbb{Z}$ mod $p$ is denoted as $\mbbZ/p\mbbZ$, while the $p$-adic integers as $\mathbb{Z}_p$.}the product of $p$-adic integers for all distinct prime $p$.  The fractal nature of $\widehat{\mbbZ}$ is manifest by the fact that in the profinite topology, the $p$-adic integers $\mathbb{Z}_p$ are a Cantor set.

\subsection{Inverse system of coverings}

Given a connected manifold $Y$, its universal covering $\widetilde{Y}$ can be constructed as follows.  Fix a point $y_0\in Y$ and for any point $y\in Y$, let $F_y$ be the set of paths from $y_0$ to $y$ up to relative homotopy.  Note that as a set, $F_y$ can be identified with $\pi_1(Y,y_0)$ by fixing a path in $F_y$.  Then with an appropriate topology, the union $\{F_y\}_{y\in Y}$ is the universal cover $\widetilde{Y}$ of $Y$.

Given a FIN $N \subset \pi_1(Y)$, the corresponding covering space $\widetilde{Y}_N$ with fundamental group $N$ can be constructed by introducing an equivalence relation into the set $F_y$. Two classes $\gamma_1$ and $\gamma_2$ in $F_y$ are equivalent if $\gamma_1\cdot \bar{\gamma_2}\in N$.  The resulting quotient space is then $\widetilde{Y}_N$.

The collection $(\widetilde{Y}_N, f_{N,M}, \mathcal{N}(\Gamma))$ forms an inverse system $\mathcal{C}(Y)$ of regular coverings corresponding to $\textrm{FIN}(\Gamma)$, where the map $f_{N,M}$ induces the map $\rho_{N,M}$ in $\textrm{FIN}(\pi_1(Y))$.

\subsubsection{Galois towers}

The topological counterparts of FIN sequences in $\textrm{Fin}(\pi_1(Y))$ are Galois towers in $\mathcal{C}(Y)$.

Let $X$ be any topological space and $$\cdots \rightarrow \tilde{X}_n \rightarrow \tilde{X}_{n-1}\rightarrow \cdots \rightarrow \tilde{X_1}\rightarrow X=X_0$$ be a sequence of pairwise regular covers of $X$.  The sequence of regular covers $\{X_n\}$ of $X$ will be called a Galois tower of $X$, denoted as $GT(X)=\{X_n\}$. A Galois tower is cofinal if $\cap_n \pi_1(X_n)=1$.

Set $G_n=\pi_1(X_n)/\pi_1(X_{n-1})$, then $$G_0\leftarrow G_1\leftarrow G_2\leftarrow ...$$ is a sequence of finite groups with $G_0=\pi_1(X)$.  The sequence of groups $\{G_n\}$ will be called the group of the $GT(X)$, denoted as $\Pi(X)=\{G_n\}$.

Given a topological space $X$ and two subsets $S_1, S_2$ of its fundamental group $\pi_1(X)$.
Then $S_1,S_2$ descend to two subsets $S_1^n, S_2^n$ of each $G_n$.  In the following, we will often drop the superscript $n$ from $S_1^n, S_2^n$.

\subsection{Direct system of vector spaces}

A direct system in a category with a direct set $I$ is a family of objects $\{X_\alpha\}_{\alpha \in I}$ and a family of morphisms $f_{\alpha \beta}: X_\alpha \rightarrow X_\beta$ whenever $\alpha \leq \beta$, such that:
\begin{enumerate}
\item $f_{\alpha \alpha}=\textrm{id};$
\item $f_{\alpha \beta}f_{\beta \gamma}=f_{\alpha \gamma}$, whenever $\alpha \leq \beta \leq \gamma.$
\end{enumerate}

\begin{defn}
Given a direct system of vector spaces $(V_\alpha, f_{\alpha \beta}, I)$.  The direct (or inductive) limit is defined as the quotient vector space
$$\lim_{\to} V_\alpha={\widetilde{V}}/K,$$
where $\widetilde{V}$ and $K$ are the vector spaces $\widetilde{V}=\oplus_{\alpha \in I}V_\alpha$, and $K$ spanned by all vectors $\{x_\beta-f_{\alpha \beta}(x_\alpha)\}$ whenever $\alpha < \beta$.
\end{defn}

If the vector spaces $V_\alpha$ are Hilbert spaces and $f_{\alpha \beta}$ are unitaries, then $\lim_{\to} V_\alpha$ becomes a Hilbert space.  Given any two vectors $[v],[w]\in \lim_{\to} V_\alpha$, there exist $\alpha, \beta$ with representatives $v\in V_\alpha, w\in V_\beta$.  Choose a $\gamma \in I$ such that $\alpha, \beta \leq \gamma$, then define the inner product of $[v],[w]$ in $V_\gamma$.

We have the freedom to make $\mathcal{N}(\Gamma)$ either into a direct system or an inverse system.  TQFTs are representations of bordism categories, hence should exchange limit and colimit.  The target Hilbert space from our construction will be a direct limit.  Since the LR models are defined on the quotients $\Gamma/N$, the natural choice is the inverse system because $N\subset M$ should lead to some map $L(\Gamma/M,d,q)\rightarrow L(\Gamma/N,d,q)$.

\subsection{Virtually invariant total Hilbert space}

Given a pair of FINs $N\subset M$, there is an isometric embedding  $\iota: L(\Gamma/M,d,q)\rightarrow L(\Gamma/N, d,q)$, where the image of $L(\Gamma/N,d,q)$ is a subspace of the $M/N$ invariant states.  The direct limit $L_{\iota}(\Gamma, d,q)$ of $(L(\Gamma/N, d,q),\iota_{N,M}, \mathcal{N}(\Gamma))$ is a possible definition for the total Hilbert space $L(\Gamma,d,q)$.

\subsection{GNS construction of the total Hilbert space}

In mathematical physics, the GNS construction in $\mbbC^{*}$-algebras is usually used to construct a Hilbert space from algebras, which would lead to another version of $L(\Gamma,d,q)$.

\section{Profinite low energy limits on 3-manifolds}

Three-manifolds are special for LR models as their fundamental groups are always residually finite, and they are powerful invariants as illustrated by the Poincare conjecture.  As a comparison, there are infinitely many simply connected closed $4$-manifolds such as $S^4, S^2\times S^2, {\mbbC}P^2$ and their connected sums.  Manifolds in this section are always connected. Otherwise the fundamental group should be replaced by the fundamental groupoid.

It is both interesting and challenging to define the LR model on general groups by taking appropriate limits so that LR models represent new phases of matter, presumably closely related to topological phases. The difficulty lies in identifying a proper way to perform some version of scaling or renormalization.  In conventional topological phases, low energy scaling limit is essentially trivial as the ground state manifold is independent of the lattices.  Since there are no Riemannian metrics involved, the large volume limit can be either regarded as trivial as the scaling limit or simply irrelevant.  Thinking a bit harder, we propose a version of infinite volume limit as the extension of a conventional TQFT to open manifolds---manifolds that are non-compact without boundaries such as $\mbbR^n$.  Conventional TQFTs are defined only for compact manifolds, which can be extended to open manifolds \cite{F97}.  There are much more interesting open manifolds than $\mbbR^n$ and a famous one is the Whitehead manifold $W$: a contractible $3$-manifold that is not homeomorphic to $\mbbR^3$, but $W\times \mbbR$ is homeomorphic to $\mbbR^4$.  The difference between $\mbbR^3$ and $W$ lies at the end---the neighborhoods at infinity.

Open manifolds can be regarded as limits of closed manifolds, though not in any canonical way.  One simple example would be $\mbbR^n$ as the limit of a sequence of spheres $S^n$ using the stereographic projections: a sequence of $S^n$ sitting at the origin in the upper half space with increasingly larger sizes that go to infinity.  Slightly non-trivial would be $\mbbR^n$ as the limit of a sequence of $n$-tori $T^n$ using the product of a sequence of circles limiting to $\mbbR$.  If such sequences are regarded as some kinds of scaling, then the embedding of one local Hilbert space into another depends heavily on the choice of the limiting sequence.  We propose to take the limit of the LR model on $3$-manifolds using profinite completions, which amounts to scaling via covering space sequences limitng to its universal cover.  It is not clear if such a limit should be called a scaling limit or a large volume limit, so we will simply refer it as the {\it profinite low energy limit}.

Since our model is defined on the fundamental group of a manifold, abstractly all manifolds with the same $\pi_1$ have the same theory.  But this is not completely accurate when locality is taken into consideration.  For example, both the circle $S^1$ and the 3-manifold $S^1\times S^2$ have the same fundamental group $\mbbZ$.  It is possible to imagine that the LDOF on $S^2$ can change how we arrive at a limit.

\subsection{3-manifolds with finite fundamental groups}

There are infinitely many 3-manifolds with finite fundamental groups such as the lens spaces $L(p,q)$, which all have the same universal cover $S^3$.

Since the profinite completion of a finite group $G$ is simply itself, the LR model on $G$ is already the full story.

The Hilbert space associated to $S^3$ is trivial ($\cong \mbbC$), which is important as it suggests the stability of the theory.  In conventional TQFTs,  the triviality of the Hilbert space for $S^n$ is conjectured to be equivalent to the stability of the TQFT.

\subsection{Limits}

Two possible definitions of the total Hilbert space $L(\Gamma,d,q)$ are suggested at the end of last section.  A potential problem is the mixing of different energy scales.  Since we are only interested in low energy effective theories, we will define the low energy limit following the framework \cite{SW18}.

The low energy limit in \cite{SW18} is defined through a double limit process.
Given a sequence of theories $\{W_\alpha\}=\{(L_\alpha, H_\alpha)\}$ consisting of pairs of Hilbert spaces $L_\alpha$ and Hamiltonians $H_\alpha$.  Each Hilbert space $L_\alpha$ decomposes into energy eigenspaces $L^\alpha_{\lambda_i}$ so that $L_\alpha=\oplus_{i}L^\alpha_{\lambda_i}$, where the energy levels are $\lambda_0<\lambda_1<\cdots .$  Assume the energy level sequence $\lambda^\alpha_i$ for each fixed $i$ converges to $\lambda^\infty_i$ as $\alpha$ goes to $\infty$ and the corresponding Hilbert spaces $L^\alpha_{\lambda_i}$ converging to $L^\infty_i$, then the low energy limit Hilbert space is $L^\infty=\oplus L^\infty_i$ if defined.

\subsubsection{Direct system of energy eigenspaces}

Given a group $\Gamma$ with two subsets $S_i$, and a FIN $N$.   The LR Hamiltonian $H(G_N;S_1,S_2)$ leads to a direct sum decomposition $$L(G_N,2,2)=\oplus_{i=0}L_i(G_N;S_1,S_2),$$ where $0=E_0<E^N_1<\cdots $ are the energy levels or eigenvalues of $H(G_N;S_1,S_2)$.  The energy level sequence $E^N_i$ consists of positive integers if $i>0$.

Suppose there exists an $E_i$-energy direct system of Hilbert spaces $(L_i(G_N;S_1,S_2), f_{M,N}, \mathcal{N}(\Gamma))$\footnote{There is an issue about the right choice of maps $f_{M,N}$ in general, which will be left to the future.} for each $i$, then the direct limit of $(L_i(G_N;S_1,S_2), f_{M,N}, \mathcal{N}(\Gamma))$ will be denoted as $L_i(\Gamma, S_1,S_2)$.

For the ground state direct system $(L_0(G_N;S_1,S_2), f_{M,N}, \mathcal{N}(\Gamma))$, one possible choice for the connecting maps $f_{N,M}$ is as follows.  Given a pair of FINs $N\subset M$,
$$ L_0(G_M;S_1,S_2)\stackrel{i_{M}}{\longrightarrow} L(G_M;S_1,S_2)\stackrel{\iota_{M,N}}{\longrightarrow} L(G_N;S_1,S_2)\stackrel{\pi_{N}}{\longrightarrow}L_0(G_N;S_1,S_2),$$
where $i_{M}$ is the inclusion, $\pi_{N}$ the projection, and $\iota_{M,N}$ the $M/N$-equivariantization map.

\subsubsection{Ground state manifold}

Given $(\Gamma, S_1,S_2)$, we define the ground state manifold in the profinite low energy limit as the direct limt of $(L_0(G_N;S_1,S_2), f_{M,N}, \mathcal{N}(\Gamma))$ if exists:
$$V(\Gamma,S_1,S_2)=\lim_{\substack{\to\\ N}} L_0(G_N,S_1,S_2).$$

The simplest example is $(\mbbZ, S_1=\{0,p\}, S_2=\{0,q\})$, where $p,q$ are distinct primes.  As shown earlier, the ground state manifold for $(G_N,S_1,S_2)$ is always $\mbbC^4$ independent of $n$ in $\mbbZ/n\mbbZ$.  It follows that the ground state manifold in the profinite low energy limit is $\mbbC^4$, too.

It would be very interesting to describe the ground state manifold in the profinite low energy limit of the Haah codes.

\subsubsection{Low energy limit}

Finally, the total Hilbert space of the profinite low energy limit for LR models is defined as $$L(\Gamma,2,2|S_1,S_2)=\oplus_{i=0}L_i(\Gamma, S_1,S_2).$$
The infinitely many Hilbert spaces cannot be added without a rescaling of the norms.  An obvious choice here is $p_i=\frac{e^{-E_i/kT}}{E}$, where $\sum_i e^{-E_i/kT}$ converges to $E$.

\subsection{Inside and outside of manifolds}

Our generalization of Haah codes to $3$-manifolds can be regarded as either living outside the manifolds in the universal covers or inside the manifolds. Outside the model lives on lattices of points, while inside on \lq\lq lattices" of closed loops.

\subsubsection{Topological point lattice model}

Given a closed manifold $Y$ with a fixed point $y_0\in Y$.  The preimage $\Pi(Y)$ of $y_0$ in the universal cover $\widetilde{Y}$ is in one-one correspondence with the fundamental group $\pi_1(Y,y_0)$, hence $\pi_1(Y)$ can be identified with $\Pi(Y)$.  Then the LR model on $\pi_1(Y)$ can be regarded as a model on $\Pi(Y)$.

The two subsets $S_1,S_2$ in the model can be identified as a collection of points around $y_0$, which serve as a unit cell of the lattice $\Pi(Y)$.

If $Y$ has a Riemannian metric, then the sizes of $gS_1\cup \bar{S_2}g$ and $S_2g\cup g\bar{S_1}$ could be unbounded.  The unbounded cases seem to be non-local in a sense, so it would be interesting to understand when the sizes of $gS_1\cup \bar{S_2}g$ and $S_2g\cup g\bar{S_1}$ are bounded.

\subsubsection{Topological loop lattice model}

Intrinsically, our model is defined on loops in $Y$. Let $\Delta(Y)$ be a cellulation of $Y$ and $T$ a maximal spanning tree of the $1$-skeleton of $\Delta(Y)$.  Then the homotopy group $\pi_1(Y,T)$ is isomorphic to $\pi_1(Y)$.  It follows that $\pi_1(Y)$ can be identified as closed loops based at $T$ up to homotopy.  We could either fix a representative set or add a term to the Hamiltonian $H{(G_n;S_1,S_2)}$ so that homotopic loops are in superpositions. Then each element of $\pi_1(Y)$ is represented either by a single closed loop or by a superposition of loops.

Given a FIN $N$, choose the cellulation of $Y$ with $|G_N|$ many vertices. Choosing representatives of $G_N$ in $\pi_1(Y)$ and identifying them with the vertex set, we draw a closed loop at each vertex $v$ to representing the fundamental group element $g_v$.  If $Y$ is dim=$3$ or higher, the loops can be chosen to be embedded to form a link $L_N$ in $Y$. Therefore, our models live on such links $L_N$ inside $Y$.

The LDOF on each loop is a bi-qubit.  We can realize the bi-qubit by putting a toric code on each loop thickening slightly to a torus.  Hence one interpretation of the Haah codes is that they emerge from links of toric codes in 3-manifolds.  It follows that the bi-qubit LDOF is already not spatially local as it spreads over a loop or torus, so locality of the theory is a subtle question.

\subsection{More examples}

The example $(\mbbZ, (0,p), (0,q))$ can be thought as a LR model on $S^1\times S^2$.  The fundamental group $RP^3\# RP^3$ is the free product of $\mbbZ_2$ with itself---the infinite dihedral group.  One Galois tower would be coverings with dihedral groups $D_n=\{r,s|r^n,s^2\}$ as deck transformations.  More interesting would be Galois towers for hyperbolic $3$-manifolds and Euclidean manifolds called platycosms.

\section{Future directions}

\subsection{Profinite invariant of groups}

One application is to use the Hilbert space $V(\Gamma)$ as a profinite invariant of the group $\Gamma$.  Then group theoretical properties of the group $\Gamma$ such as residual finiteness or LERFness (locally extended residual finiteness) should have manifestations in the Hilbert space $V(\Gamma)$.

A new knot invariant can be defined using our model on knot fundamental groups.  Taking the knot group of a knot $K$ with the two subsets $S_1=\{1,m\},S_2=\{1,l\}$, where $m,l$ are the meridian and longitude of $K$, we obtain the ground state Hilbert space $V(\pi_1(S^3\backslash K), S_1,S_2)$ as a knot invariant.

\subsection{Quantum codes}

The motivation for \cite{Hs14} was to find quantum codes with better asymptotic properties.  It would be interesting to study the distance of the quantum codes from some Galois towers.

To find interesting codes, it seems that we need to choose the two subsets $S_1$ and $S_2$ to be somehow independent.  In particular, it would be interesting to understand conditions on $S_1$ and $S_2$ that lead to codes with fast growth distance.

\subsection{Loop statistics}

While there are no interesting particle statistics in Haah codes, there are still loop statistics \cite{Ha11}.  It would be interesting to study them explicitly, which could provide hint for a framework to understand type II fractons.

\subsection{Intrinsic fracton models}

The fracton models that are studied so far are simple Ising-like models after ungauging.  More interesting fracton models would be those that possess non-trivial topological order after being ungauged.

\subsection{Tension between virtual invariance and low energy}

A property of a space that holds up to taking finite-index covering spaces is usually referred to as \emph{virtual} in topology.

Given $N\subset M$, the natural embedding $\iota: L(\Gamma/M,d,q)\rightarrow L(\Gamma/N,d,q)$ is not compatible with the Hamiltonians, i.e. $\iota$ does not commute with the two Hamiltonians.  Therefore, the equivariant embeddings mixed states of different energies.  But it could be used to construct a total Hilbert space.

Given a cofinal FIN sequence $N_i$ of a group $\Gamma$, then a ground state in each $L(G_{N_i},S_1,S_2)$ defines a tracial state in the $\mbbC^{*}$-algebra of $L(G_{N_i},S_1,S_2)$.  The GNS construction leads to a Hilbert space, so potentially a version of $L(\Gamma, d,q)=\otimes_{\gamma \in \Gamma} ((\mbbC^d)^{\otimes q})_\gamma.$  We can also define some limiting Hamiltonian.  But it is not clear how this quantum system is related to our profinite low energy limit.

\end{document}